\newcommand{\eps}{\varepsilon}
\DeclareMathOperator{\alp}{\rm alph}
\DeclareMathOperator{\0}{\bf 0}
\DeclareMathOperator{\1}{\bf 1}
\newcommand{\A}{\mathcal{A}}
\newcommand{\B}{\mathcal{B}}
\newtheorem{thm}{Theorem}
\newtheorem{prop}[thm]{Proposition}
\journal{}
\begin{document}
\begin{frontmatter}

\title{Separability by Piecewise Testable Languages is {\sc PTime}-Complete}

\author{Tom\'{a}\v{s} Masopust}
\ead{masopust@math.cas.cz}

\address{Institute of Mathematics, Czech Academy of Sciences, {\v Z}i{\v z}kova 22, 616 62 Brno, Czechia}

\begin{abstract}
  Piecewise testable languages form the first level of the Straubing-Th\'erien hierarchy. The membership problem for this level is decidable and testing if the language of a DFA is piecewise testable is NL-complete. The question has not yet been addressed for NFAs. We fill in this gap by showing that it is {\sc PSpace}-complete. The main result is then the lower-bound complexity of separability of regular languages by piecewise testable languages. Two regular languages are separable by a piecewise testable language if the piecewise testable language includes one of them and is disjoint from the other. For languages represented by NFAs, separability by piecewise testable languages is known to be decidable in {\sc PTime}. We show that it is {\sc PTime}-hard and that it remains {\sc PTime}-hard even for minimal DFAs. 
\end{abstract}

\begin{keyword}
  Separability \sep piecewise testable languages \sep complexity
  \MSC[2010] 68Q45 \sep 68Q17 \sep 68Q25 \sep 03D05
\end{keyword}

\end{frontmatter}

\section{Introduction}
  A regular language over $\Sigma$ is {\em piecewise testable\/} if it is a finite boolean combination of languages of the form $\Sigma^* a_1 \Sigma^* a_2 \Sigma^* \cdots \Sigma^* a_n \Sigma^*$, where $a_i\in \Sigma$ and $n\ge 0$. If $n$ is bounded by a constant $k$, then the language is called {\em $k$-piecewise testable}. Piecewise testable languages are exactly those regular languages whose syntactic monoid is $\mathcal{J}$-trivial~\cite{Simon1972}. Simon~\cite{Simon1975} provided various characterizations of piecewise testable languages, e.g., in terms of monoids or automata. These languages are of interest in many disciplines of mathematics, such as semigroup theory~\cite{Almeida2008486,AlmeidaZ-ita97,PerrinPin} for their relation to Green's relations or in logic on words~\cite{DiekertGK08} for their relation to first-order logic $\textrm{FO}[<]$ and the {\em Straubing-Th\'erien hierarchy}~\cite{Straubing81,Therien81}. 
  
  For an alphabet $\Sigma$, level 0 of the Straubing-Th\'erien hierarchy is defined as $\mathscr{L}(0)=\{\emptyset, \Sigma^*\}$. For integers $n\geq 0$, level $\mathscr{L}(n+\frac{1}{2})$ consists of all finite unions of languages $L_0 a_1 L_1 a_2 \ldots a_k L_k$ with $k\geq 0$, $L_0,\ldots, L_k\in\mathscr{L}(n)$, and $a_1,\ldots,a_k\in\Sigma$, and level $\mathscr{L}(n+1)$ consists of all finite Boolean combinations of languages from level $\mathscr{L}(n+\frac{1}{2})$. The levels of the hierarchy contain only \emph{star-free} languages~\cite{McNaughton1971}. Piecewise testable languages form the first level of the hierarchy. The hierarchy does not collapse on any level \cite{BrzozowskiK78}. In spite of a recent development~\cite{AlmeidaBKK15,Place15,PlaceZ15}, deciding whether a language belongs to level $\ell$ of the hierarchy is open for $\ell > \frac{7}{2}$. The Straubing-Th\'erien hierarchy is further closely related to the \emph{dot-depth hierarchy}~\cite{BrzozowskiK78,CohenB71,KufleitnerL12,Straubing85} and to complexity theory~\cite{Wagner04}.

  The fundamental question is how to efficiently recognize whether a given regular language is piecewise testable. Stern~\cite{Stern85a} provided a solution that was later improved by Trahtman~\cite{Trahtman2001} and Kl\'ima and Pol\'ak~\cite{KlimaP13}. Stern presented an algorithm deciding piecewise testability of a regular language represented by a DFA in time $O(n^5)$, where $n$ is the number of states of the DFA. Trahtman improved Stern's algorithm to time quadratic with respect to the number of states and linear with respect to the size of the alphabet, and Kl\'ima and Pol\'ak found an algorithm for DFAs that is quadratic with respect to the size of the alphabet and linear with respect to the number of states. Cho and Huynh~\cite{ChoH91} proved that deciding piecewise testability for DFAs is NL-complete. Event though the complexity for DFAs has intensively been investigated, a study for NFAs is missing in the literature. We fill in this gap by showing that deciding piecewise testability for NFAs is {\sc PSpace}-complete (Theorem~\ref{pspace-complete}).
  
  The knowledge of the minimal $k$ or a reasonable bound on $k$ for which a piecewise testable language is $k$-piecewise testable is of interest in several applications~\cite{MartensNNS15,HofmanM15}. The complexity of finding the minimal $k$ has been investigated in the literature~\cite{HofmanM15,KKP,KlimaP13,dlt15}. Testing whether a piecewise testable language is $k$-piecewise testable is coNP-complete for $k\ge 4$ if the language is represented as a DFA~\cite{KKP} and {\sc PSpace}-complete if the language is represented as an NFA~\cite{dlt15}. The complexity for DFAs and $k<4$ has also been discussed in detail~\cite{dlt15}. Kl\'ima and Pol\'ak~\cite{KlimaP13} further showed that the upper bound on $k$ is given by the depth of the minimal DFA. This result has recently been generalized to NFAs~\cite{ptnfas}.

  The recent interest in piecewise testable languages is mainly because of the applications of separability of regular languages by piecewise testable languages in logic on words~\cite{PlaceZ_icalp14} and in XML schema languages~\cite{icalp2013,HofmanM15,MartensNNS15}. Given two languages $K$ and $L$ and a family of languages $\mathcal{F}$, the {\em separability problem\/} asks whether there exists a language $S$ in $\mathcal{F}$ such that $S$ includes one of the languages $K$ and $L$ and is disjoint from the other. 
  Place and Zeitoun~\cite{PlaceZ_icalp14} used separability to obtain new decidability results of the membership problem for some levels of the Straubing-Th\'erien hierarchy.
  The separability problem for regular languages represented by NFAs and the family of piecewise testable languages is decidable in polynomial time with respect to both the number of states and the size of the alphabetby~\cite{icalp2013,mfcsPlaceRZ13}. 
  Separability by piecewise testable languages is of interest also outside regular languages. Although separability of context-free languages by regular languages is undecidable~\cite{Hunt82a}, separability by piecewise testable languages is decidable (even for some non-context-free languages)~\cite{CzerwinskiMRZ15}.
  Piecewise testable languages are further investigated in natural language processing~\cite{FuHT2011,Rogers:2007}, cognitive and sub-regular complexity~\cite{RogersHFHLW12}, and learning theory~\cite{GarciaR04,Kontorovich2008}. They have been extended from word languages to tree languages~\cite{Bojanczyk:2012,GarciaV90,Goubault-Larrecq16}.
  
  In this paper, we show that separability of regular languages represented as NFAs by piecewise testable languages is a {\sc PTime}-complete problem (Theorem~\ref{p-complete}) and that it remains {\sc PTime}-hard even for minimal DFAs. Consequently, the separability problem is unlikely to be solvable in logarithmic space or effectively parallelizable.

\section{Preliminaries}
  We assume that the reader is familiar with automata theory~\cite{sipser}. The cardinality of a set $A$ is denoted by $|A|$ and the power set of $A$ by $2^A$. The free monoid generated by an alphabet $\Sigma$ is denoted by $\Sigma^*$. A word over $\Sigma$ is any element of $\Sigma^*$; the empty word is denoted by $\eps$. For a word $w\in\Sigma^*$, $\alp(w)\subseteq\Sigma$ denotes the set of all symbols occurring in $w$.

  A {\em nondeterministic finite automaton\/} (NFA) is a quintuple $\A = (Q,\Sigma,\delta,Q_0,F)$, where $Q$ is the finite nonempty set of states, $\Sigma$ is the input alphabet, $Q_0\subseteq Q$ is the set of initial states, $F\subseteq Q$ is the set of accepting states, and $\delta\colon Q\times\Sigma\to 2^Q$ is the transition function extended to the domain $2^Q\times\Sigma^*$ in the usual way. The language {\em accepted\/} by $\A$ is the set $L(\A) = \{w\in\Sigma^* \mid \delta(Q_0, w) \cap F \neq\emptyset\}$.   

  A {\em path\/} $\pi$ from a state $q_0$ to a state $q_n$ under a word $a_1a_2\cdots a_{n}$, for some $n\ge 0$, is a sequence of states and input symbols $q_0, a_1, q_1, a_2, \ldots, q_{n-1}, a_{n}, q_n$ such that $q_{i+1} \in \delta(q_i,a_{i+1})$, for all $i=0,1,\ldots,n-1$. Path $\pi$ is {\em accepting\/} if $q_0\in Q_0$ and $q_n\in F$. We write $q_0 \xrightarrow{a_1a_2\cdots a_{n}} q_{n}$ to denote that there is a path from $q_0$ to $q_n$ under the word $a_1a_2\cdots a_{n}$. 
  
  We says that $\A$ has a {\em cycle over an alphabet $\Gamma\subseteq\Sigma$\/} if there is a state $q$ in $\A$ and a word $w$ over $\Sigma$ such that $q\xrightarrow{\,w\,} q$ and $\alp(w)=\Gamma$.

  The NFA $\A$ is {\em deterministic\/} (DFA) if $|Q_0|=1$ and $|\delta(q,a)|=1$ for every $q \in Q$ and $a \in \Sigma$. Although we define DFAs as complete, we mostly depict only the most important transitions in our illustrations. The reader can easily complete such an incomplete DFA.

  Let $K$ and $L$ be languages. A language $S$ \emph{separates $K$ from $L$\/} if $S$ contains $K$ and does not intersect $L$. Languages $K$ and $L$ are \emph{separable by a family of languages $\mathcal{F}$\/} if there exists a language $S$ in $\mathcal{F}$ that separates $K$ from $L$ or $L$ from $K$.

\section{Piecewise Testability for NFAs}
  Given an NFA $\A$ over an alphabet $\Sigma$, the {\em piecewise-testability problem\/} asks whether the language $L(\A)$ is piecewise testable. Although the membership in {\sc PSpace} follows basically from the result by Cho and Huynh~\cite{ChoH91}, we prefer to provide the proof here for two reasons: (i) we would like to provide unfamiliar readers with a method to recognize whether a regular language is piecewise testable, (ii) Cho and Huynh assume that the input is a minimal DFA, hence it is necessary to extend their algorithm with a non-equivalence check. We use the following characterization in our proof.
  
  \begin{prop}[Simon~\cite{Simon1975}, Cho and Huynh~{\cite[Proposition 2.3(b)]{ChoH91}}]\label{ChoHcharacterization}
    A regular language $L$ is not piecewise testable if and only if the minimal DFA for $L$ either 
    \begin{enumerate}
      \item contains a nontrivial (non-self-loop) cycle or 
      \item there are three distinct states $p$, $q$, $q'$ such that $q$ and $q'$ are reachable from $p$ by words over the symbols that form self-loops on both $q$ and $q'$; formally, there are paths $p \xrightarrow{w} q$ and $p \xrightarrow{w'} q'$ in the DFA with $w,w' \in \Sigma(q) \cap \Sigma(q')$, where $\Sigma(q)=\{ a\in\Sigma \mid q\xrightarrow{a} q\}$.
    \end{enumerate}
  \end{prop}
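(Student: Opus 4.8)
The plan is to derive the two conditions from Simon's combinatorial characterization: $L$ is piecewise testable if and only if there is a $k$ such that $L$ is a union of classes of the congruence $\sim_k$, where $u\sim_k v$ means that $u$ and $v$ contain exactly the same subwords (scattered subsequences) of length at most $k$. Writing $\A=(Q,\Sigma,\delta,q_0,F)$ for the minimal DFA and recalling that $\sim_k$ is a congruence of finite index, the statement reduces to relating the reachability structure of $\A$ to the saturation of $L$ by some $\sim_k$. I will use repeatedly the standard pumping property of $\sim_k$: if $\alp(t)\subseteq\alp(s)$, then $s^{N}t\sim_k s^{N}$ and $t\,s^{N}\sim_k s^{N}$ for all sufficiently large $N$.

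For the direction that each condition forces $L$ to be non-piecewise-testable, I would produce, for every $k$, a pair of $\sim_k$-equivalent words separated by $L$, showing that no $\sim_k$ saturates $L$. Suppose first $\A$ has a nontrivial cycle; pick two distinct states $p_0\ne p_1$ on it, so $p_0\xrightarrow{\,u\,}p_1$ and $p_1\xrightarrow{\,v\,}p_0$. Fix $x$ with $q_0\xrightarrow{\,x\,}p_0$ and, by minimality, a word $z$ distinguishing $p_0$ from $p_1$. Put $s=uv$; then $x\,s^{N}$ reaches $p_0$ and $x\,s^{N}u$ reaches $p_1$, while $s^{N}u\sim_k s^{N}$ (as $\alp(u)\subseteq\alp(s)$) gives, after prepending $x$ and appending $z$, that $x\,s^{N}z\sim_k x\,s^{N}u\,z$ whereas $z$ separates the two by $L$. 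For condition~2, with $p\xrightarrow{\,w\,}q$, $p\xrightarrow{\,w'\,}q'$, $q\ne q'$, and $\alp(w),\alp(w')\subseteq\Sigma(q)\cap\Sigma(q')$, set $s=ww'$; then every letter of $s$ self-loops at both $q$ and $q'$, so $x\,w\,s^{N}$ reaches $q$ and $x\,w'\,s^{N}$ reaches $q'$, while $w\,s^{N}\sim_k s^{N}\sim_k w'\,s^{N}$ makes the two words $\sim_k$-equivalent although a distinguishing $z$ separates $q$ from $q'$.

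For the converse I would prove the contrapositive: if $\A$ has no nontrivial cycle and condition~2 fails, then $\sim_{n}$ saturates $L$ for $n=|Q|$. Absence of nontrivial cycles means that reachability (the relation $p\preq q$ holding when $q$ is reachable from $p$) is a partial order, so along the run of any word the current state is $\preq$-nondecreasing and increases strictly at most $n-1$ times. The failure of condition~2 yields, for each state $p$ and each alphabet $\Gamma$, a unique state reachable from $p$ by $\Gamma$-words on which every letter of $\Gamma$ self-loops: existence follows from the partial order (keep moving up while some letter of $\Gamma$ does not self-loop), and uniqueness is exactly the negation of condition~2. Call this the $\Gamma$-sink of $p$. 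The goal is then to prove, by induction on the height of $p$ in $\preq$, that $u\sim_n v$ implies $\delta(p,u)=\delta(p,v)$, decomposing each word at its first non-self-looping letter and using the sinks to recombine the two runs.

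The main obstacle is precisely this last induction. The difficulty is that $\sim_n$ does not preserve prefixes, so two $\sim_n$-equivalent words may leave $p$ through different letters and land in different states, and one must show that the sink structure forces these diverging runs to reconcile; here the partial-order bound $n-1$ on the number of ascents is what allows a fixed subword length $n$ to control the entire run. I expect the cleanest way to close this gap is to phase the induction through the sinks, proving that a word reaches the $\Gamma$-sink of $p$ as soon as its subword content is rich enough, so that both runs arrive at the same sink. Alternatively one can bypass the combinatorics by invoking Simon's theorem that $L$ is piecewise testable iff the transition monoid of $\A$ is $\mathcal{J}$-trivial, and then verifying that conditions~1 and~2 are exactly the automaton-level renderings of the failure of $\mathcal{R}$- and $\mathcal{L}$-triviality.
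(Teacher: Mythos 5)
First, a framing remark: the paper does not prove this proposition at all---it is quoted as a known result of Simon and of Cho and Huynh---so there is no in-paper argument to compare against, and your attempt has to be judged against what a complete proof requires. Your ``only if'' half (each of the two conditions forces $L$ to be non-piecewise-testable) is correct and complete. The pumping property you invoke, namely $s^N t \sim_k s^N$ and $t\,s^N \sim_k s^N$ whenever $\alp(t)\subseteq\alp(s)$ and $N\ge k$, is a standard and easily verified fact about Simon's congruence, and you combine it correctly with reachability and pairwise distinguishability of states in the minimal DFA to produce, for every $k$, a pair of $\sim_k$-equivalent words separated by $L$. Both constructions (for the nontrivial cycle and for the pair $q$, $q'$ with common self-looping alphabet) are sound.

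The genuine gap is the converse direction, and you acknowledge it yourself. What you do establish there is correct but is only the setup: acyclicity makes reachability a partial order, and the negation of condition~2 gives existence and uniqueness of the $\Gamma$-sink of each state (uniqueness does follow, after checking the degenerate case where the sink equals $p$ itself, which cannot produce two distinct sinks). But the actual claim---that these structural properties force some $\sim_k$ to saturate $L$, via the induction ``$u\sim_n v$ implies $\delta(p,u)=\delta(p,v)$''---is precisely the hard direction of Simon's theorem, and your sketch does not supply it. The difficulty you name (two $\sim_n$-equivalent words may leave $p$ by different letters and land in incomparable states) is exactly the obstruction that makes this direction nontrivial in every known proof, and ``phase the induction through the sinks'' is a plan, not an argument. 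The proposed escape route---invoke Simon's theorem that piecewise testability equals $\mathcal{J}$-triviality of the syntactic monoid and check that conditions 1 and 2 are the automaton renderings of failures of $\mathcal{R}$- and $\mathcal{L}$-triviality---does not close the gap either: while ``no nontrivial cycle'' corresponds to $\mathcal{R}$-triviality by a classical theorem, the equivalence of condition~2 (stated on the minimal DFA of $L$, not of its reversal) with the failure of $\mathcal{L}$-triviality is itself the nontrivial content of Cho and Huynh's Proposition~2.3, i.e., the very statement being proven. So that route either is circular or leaves you with work comparable to the induction you skipped.
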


  We now prove the first result of this paper.
  \begin{thm}\label{pspace-complete}
    The piecewise-testability problem for NFAs is {\sc PSpace}-complete.
  \end{thm}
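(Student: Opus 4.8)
The plan is to prove {\sc PSpace}-completeness in two parts: membership in {\sc PSpace} and {\sc PSpace}-hardness.

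For membership, I would use Proposition~\ref{ChoHcharacterization}, which characterizes non-piecewise-testability of a regular language $L$ in terms of its \emph{minimal DFA}. The obstacle is that we are given an NFA, and the minimal DFA can be exponentially large; we cannot construct it explicitly in polynomial space if we store it naively. The key observation is that {\sc PSpace} is closed under complement (by Savitch/Immerman--Szelepcs\'enyi) and that we can simulate the subset construction \emph{on the fly}: a state of the minimal DFA is represented by a reachable subset $S\subseteq Q$, which has size $\le|Q|$ and hence fits in polynomial space. States of the subset automaton can be compared for equivalence using a standard {\sc PSpace} inequivalence check (guess a distinguishing word symbol by symbol, tracking the two subsets reachable under the guessed prefix). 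With this machinery I would design a nondeterministic polynomial-space procedure that guesses a witness for non-piecewise-testability according to the two conditions of Proposition~\ref{ChoHcharacterization}: either (1) guess a state-subset $S$ and a nonempty word $w$ with $\alp(w)$ taking at least two symbols that returns $S$ to itself while traversing a \emph{distinct} intermediate subset (a nontrivial cycle), or (2) guess three pairwise inequivalent reachable subsets $p,q,q'$ together with words $w,w'$ over the common self-loop alphabet $\Sigma(q)\cap\Sigma(q')$ witnessing $p\xrightarrow{w}q$ and $p\xrightarrow{w'}q'$. Each guess is verified in polynomial space by on-the-fly subset simulation and inequivalence checks, so non-piecewise-testability is in {\sc NPSpace}$=${\sc PSpace}, and hence the problem itself is in {\sc PSpace}.

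For hardness, I would reduce from a known {\sc PSpace}-complete problem, the most natural being the DFA/NFA \emph{universality} problem (``does $L(\A)=\Sigma^*$?'') for NFAs, which is {\sc PSpace}-complete. The idea is that $\Sigma^*$ is piecewise testable, so universality should be expressible as a piecewise-testability question after a suitable gadget transformation. The challenge is that piecewise testability is \emph{not} simply universality: many non-universal languages are piecewise testable. I would therefore engineer a reduction that takes an NFA $\A$ and produces an NFA $\A'$ whose language is piecewise testable \emph{if and only if} $L(\A)=\Sigma^*$. A workable strategy is to force the alternative (non-universal) case to create a forbidden pattern from Proposition~\ref{ChoHcharacterization} in the minimal DFA: introduce fresh symbols and a cyclic gadget so that a rejected word $w\notin L(\A)$ induces a nontrivial cycle (condition 1) or the branching structure of condition 2, whereas if every word is accepted the resulting minimal DFA collapses to a trivial piecewise-testable automaton.

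The hard part will be designing this gadget so that the equivalence ``$L(\A)=\Sigma^*$ iff $L(\A')$ is piecewise testable'' holds exactly, and arguing about the structure of the \emph{minimal} DFA of $\A'$ (which the characterization requires) rather than of $\A'$ itself, since minimization can merge states and destroy or create the forbidden patterns. I would control this by appending distinguishing suffixes to the gadget so that the relevant states remain pairwise inequivalent after minimization, thereby preserving the witness cycle or branching precisely in the non-universal case. Once the gadget is verified, polynomial-time constructibility of $\A'$ is routine, giving {\sc PSpace}-hardness and, combined with the membership argument, {\sc PSpace}-completeness.
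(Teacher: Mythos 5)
Your membership argument is essentially the paper's: simulate the subset construction on the fly, guess witnesses for the two patterns of Proposition~\ref{ChoHcharacterization}, compensate for the fact that the subset automaton is not minimal by running {\sc PSpace} inequivalence checks, and conclude via {\sc NPSpace}~$=$~{\sc PSpace} and closure of {\sc PSpace} under complement. Two small corrections there: for the cycle pattern (1) the two subsets on the cycle must be verified \emph{inequivalent}, not merely distinct (equivalent subsets merge in the minimal DFA, destroying the pattern), and the cycle word need not contain two distinct symbols --- a cycle $p\xrightarrow{a}q\xrightarrow{a}p$ with $p\neq q$ is already nontrivial, so your requirement $|\alp(w)|\ge 2$ is both unnecessary and wrong.

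The hardness half, however, has a genuine gap. You correctly pick NFA universality as the source problem, but the reduction itself --- a gadget with the property ``$L(\A)=\Sigma^*$ iff $L(\A')$ is piecewise testable'' --- is never constructed; you only list the properties it should have and concede that designing it, and controlling what minimization does to it, is ``the hard part.'' That gadget \emph{is} the proof; without it nothing has been established, and the difficulty you flag is real: whether a forbidden pattern survives in the \emph{minimal} DFA of $\A'$ depends on global state equivalence, which is hard to control by local gadgetry when $L(\A)$ is an arbitrary non-universal language (which may itself happen to be piecewise testable). The paper closes exactly this gap by invoking a meta-theorem of Hunt III and Rosenkrantz: any property $\mathbb{P}$ of languages over $\{0,1\}$ such that (i) $\mathbb{P}(\{0,1\}^*)$ holds and (ii) some regular language is not expressible as a quotient $x\backslash L$ of any $L$ satisfying $\mathbb{P}$, is as hard to decide as ``$=\{0,1\}^*$''. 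Piecewise testability satisfies (i) trivially, and (ii) because piecewise testable languages are closed under quotients while some regular languages (e.g., $(ab)^*$, whose minimal DFA has a nontrivial cycle) are not piecewise testable; hardness then follows with no gadget at all. To repair your route you would in effect have to reprove this: arrange the construction so that every word $w\notin L(\A)$ exposes a fixed non-piecewise-testable regular language as a quotient of $L(\A')$, and then use closure under quotients --- rather than a direct analysis of the minimal DFA --- to conclude that $L(\A')$ is not piecewise testable. That closure argument is the missing idea.
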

  \begin{proof}
    To prove that piecewise testability is in {\sc PSpace}, let $\A=(Q,\Sigma,\delta,Q_0,F)$ be an NFA. Since $\A$ is nondeterministic, we cannot directly use the algorithm of Cho and Huynh~\cite{ChoH91}. Instead, we consider the DFA $\A'$ obtained from $\A$ by the standard subset construction where the states of $\A'$ are subsets of states of $\A$. We now need to modify Cho and Huynh's algorithm to check whether the guessed states are distinguishable.
    \begin{algorithm}
      \DontPrintSemicolon
      \caption{Non-piecewise testability (symbol $\rightsquigarrow$ stands for reachability)}
      \label{alg}\label{algpspace}
      \SetKwInOut{Input}{Input}\SetKwInOut{Output}{Output}
        \Input{An NFA  $\A=(Q,\Sigma,\delta,Q_0,F)$}
        \Output{{\tt true} if and only if $L(\A)$ is not piecewise testable}
        Guess states $X,Y\subseteq Q$ of $\A'$ \tcp*{Verify property (1)}
        \lIf {$Q_0 \rightsquigarrow X \rightsquigarrow Y \rightsquigarrow X$}
             {go to line~\ref{noneq}}
        Guess states $P,X,Y\subseteq Q$ of $\A'$ \tcp*{Verify property (2)}
        Check that $Q_0\rightsquigarrow P$, $Q_0\rightsquigarrow X$, and $Q_0\rightsquigarrow Y$\;
        $s_1 := P$, $s_2 := P$\;
        \Repeat {$s_1 = X$ and $s_2 = Y$}{
          guess $a,b\in \Sigma(X)\cap \Sigma(Y)$\;
          $s_1 := \delta(s_1,a)$,
          $s_2 := \delta(s_2,b)$\;
        }
        Guess states $X',Y'$ of $\A'$ s.\,t. $X'\cap F \neq\emptyset$ and $Y'\cap F=\emptyset$; 
          \tcp*{Non-equiv.\ check of $X$ and $Y$}
          \label{noneq}
        $s_1 := X$, $s_2 := Y$\; 
        \Repeat {$s_1 = X'$ and $s_2 = Y'$}{
          guess $a\in\Sigma$\;
          $s_1 := \delta(s_1,a)$,
          $s_2 := \delta(s_2,a)$\;
        }
        \Return {\tt true}
    \end{algorithm}
    For a set of states $X\subseteq Q$, let $\Sigma(X)=\{ a \in \Sigma \mid X \xrightarrow{a} X\}$. The entire algorithm is presented as Algorithm~\ref{algpspace}. 
    
    In line~1 the algorithm guesses two states, $X$ and $Y$, of $\A'$ that are verified to be reachable and in a cycle in line 2. If so, it is verified in lines 10--15 that the states $X$ and $Y$ are not equivalent in $\A'$. If there is no nontrivial cycle in $\A'$ or the guess in line~1 fails, property (2) of Proposition~\ref{ChoHcharacterization} is verified in lines 3--9, and the guessed states $X$ and $Y$ are checked to be non-equivalent in lines 10--15. Notice that in lines 6--9, the algorithm verifies that the states $X$ and $Y$ are reachable from a state $P$ by paths of the same length rather than by paths of different lengths. This is not a problem because line~7 considers only symbols from $\Sigma(X)\cap \Sigma(Y)$. If $\A'$ reaches $X$ under $\Sigma(X)\cap \Sigma(Y)$, it stays in $X$ under those symbols (and analogously for $Y$). Thus, under $\Sigma(X)\cap \Sigma(Y)$, the states $X$ and $Y$ are reachable from state $P$ by paths of different lengths if and only if they are reachable by paths of the same length. The algorithm is in {\sc NPSpace $ = $ PSpace}~\cite{Savitch1970} and returns a positive answer if and only if $\A$ does not accept a piecewise testable language. Since {\sc PSpace} is closed under complement~\cite{Immerman88,Szelepcsenyi87}, piecewise testability is in {\sc PSpace}.

    {\sc PSpace}-hardness follows from a result by Hunt III and Rosenkrantz~\cite{HuntR78}, who have shown that a property $\mathbb{P}$ of languages over the alphabet $\{0,1\}$ such that (i) $\mathbb{P}(\{0,1\}^*)$ is true and (ii) there exists a regular language that is not expressible as a quotient $x\backslash L=\{ w \mid xw \in L\}$, for some $L$ for which $\mathbb{P}(L)$ is true, is as hard as to decide ``$=\{0,1\}^*$''. Since piecewise testability is such a property (piecewise testable languages are closed under quotient) and universality is {\sc PSpace}-hard for NFAs, the result implies that piecewise testability for NFAs is {\sc PSpace}-hard.
  \end{proof}

\section{Separability of Regular Languages by Piecewise Testable Languages}
  We now show that separability of regular languages by piecewise testable languages is {\sc PTime}-complete. Since the membership in {\sc PTime} is known~\cite{icalp2013,mfcsPlaceRZ13}, we prove {\sc PTime}-hardness by constructing a log-space reduction from the {\sc PTime}-complete monotone circuit value problem~\cite{limits}. 

  The {\em monotone circuit value problem\/} consists of a set of boolean variables $g_1$, $g_2$, \ldots, $g_n$ called {\em gates}, whose values are defined recursively by equalities of the forms $g_i = \0$ (then $g_i$ is called a $\0$-gate), $g_i = \1$ ($\1$-gate), $g_i = g_j \wedge g_k$ ($\wedge$-gate), or $g_i = g_j \vee g_k$ ($\vee$-gate), where $j, k < i$. Here $\0$ and $\1$ are symbols representing the boolean values. The aim is to compute the value of $g_n$. 
  
  A word $a_1 a_2 \cdots a_n$ with $a_i\in\Sigma$ is a subsequence of a word $w$ if $w \in \Sigma^* a_1 \Sigma^* a_2 \Sigma^* \cdots \Sigma^* a_n \Sigma^*$. For languages $K$ and $L$, a sequence $(w_i)_{i=1}^{r}$ of words is a {\em tower\/} between $K$ and $L$ if $w_1 \in K \cup L$ and, for all $i = 1, 2, \ldots, r-1$,
    $w_i$ is a subsequence of $w_{i+1}$,
    $w_i \in K$ implies $w_{i+1} \in L$, and
    $w_i \in L$ implies $w_{i+1} \in K$.
  The number of words in the sequence is the {\em height\/} of the tower; the height may be infinite. Languages $K$ and $L$ are not required to be disjoint, but a $w \in K \cap L$ implies an infinite tower $w,w,\ldots$ between $K$ and $L$. 
  
  Our proof is based on the fact that non-separability of languages $K$ and $L$ by a piecewise testable language is equivalent to the existence of an infinite tower between the languages $K$ and $L$~\cite{icalp2013}.

  \begin{thm}\label{p-complete}
    Deciding separability of regular languages represented as NFAs by piecewise testable languages is {\sc PTime}-complete. It remains {\sc PTime}-hard even for minimal DFAs.
  \end{thm}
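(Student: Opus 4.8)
The membership in {\sc PTime} is already established~\cite{icalp2013,mfcsPlaceRZ13}, so the plan is to prove {\sc PTime}-hardness by exhibiting a logarithmic-space reduction from the monotone circuit value problem. Throughout I would rely on the characterization recalled just above: two languages $K$ and $L$ are \emph{not} separable by a piecewise testable language if and only if there is an infinite tower between them. It therefore suffices to build, from a monotone circuit with gates $g_1,\ldots,g_n$, two regular languages $K$ and $L$ for which an infinite tower exists exactly when the output gate $g_n$ evaluates to $\1$; then $g_n=\1$ if and only if $K$ and $L$ are non-separable, which is the required reduction.

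The construction I have in mind processes the gates in increasing index order and attaches to each gate a small gadget that is present, in suitably mirrored form, in both automata. Intuitively, a word of a tower records, as a subsequence, which gates have been \emph{activated}, and the only route to an infinite tower is to reach a configuration in which a symbol associated with $g_n$ can be pumped through a self-loop present in both $K$ and $L$; this symbol becomes available precisely when $g_n$ is true. A $\1$-gate is unconditionally activatable and a $\0$-gate never is; for a $\vee$-gate $g_i = g_j \vee g_k$ the gadget lets activation pass once either $g_j$ or $g_k$ has been activated, while for an $\wedge$-gate $g_i = g_j \wedge g_k$ the conjunction is enforced \emph{across} the two languages: an activation symbol may be appended on the $K$-side only after the corresponding prerequisite has already appeared on the $L$-side, and conversely, so that a growing subsequence chain can survive the alternation between $K$ and $L$ only when both inputs are available. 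This is where the two-language structure of separability does the work that a single word could not. I would also keep $K$ and $L$ disjoint, so that every tower must grow strictly in length and the subsequence constraints genuinely constrain the evaluation rather than being discharged by a single repeated word in $K\cap L$.

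The correctness argument then splits into two directions. For completeness, assuming $g_n=\1$, I would exhibit an explicit infinite tower by following an accepting valuation of the circuit and pumping the terminal self-loop, verifying that each word is a subsequence of its successor and that membership alternates between $K$ and $L$. For soundness, assuming $g_n=\0$, I would show that every tower has bounded height, whence no infinite tower exists and $K,L$ are separable; the delicate point is to rule out spurious infinite towers exploiting unintended cycles or accidental common subsequences, which forces the gadgets to exhibit no cyclic behaviour other than the designated terminal loop gated by $g_n$. I expect this soundness direction to be the main obstacle, since the conjunction handshake must be proved to genuinely block unbounded growth whenever an $\wedge$-gate has a false input.

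Finally I would discharge the two global requirements. That the reduction runs in logarithmic space follows because each gate contributes a gadget of bounded size whose wiring depends only on the local equation $g_i = g_j\,\square\,g_k$, readable with a constant number of pointers of logarithmic width. To obtain the stronger claim that hardness holds already for \emph{minimal DFAs}, the gadgets must be deterministic and the assembled automata minimal \emph{by construction} --- with every state reachable and every pair of distinct states separated by an explicit distinguishing word --- rather than by invoking a minimisation procedure, which would be unavailable within the log-space bound. Establishing reachability and pairwise distinguishability of all states of the two automata is thus the second, more technical, item to settle before concluding that the separability problem is {\sc PTime}-complete and remains {\sc PTime}-hard for minimal DFAs.
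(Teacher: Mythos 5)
Your high-level plan coincides with the paper's: {\sc PTime} membership is cited, hardness goes by a log-space reduction from the monotone circuit value problem, and correctness is routed through the equivalence between non-separability by piecewise testable languages and the existence of an infinite tower. The problem is that the proposal stops at the level of intentions exactly where content is needed, and the items you yourself defer (``the main obstacle'', ``the second, more technical, item'') constitute essentially the whole proof. Concretely, no automata are constructed. The paper's construction is very specific: one automaton $\A'$ encodes the circuit DAG --- state $i$ has transitions $a_i$ to $\ell(i)$ and $b_i$ to $r(i)$, accepting states $\0$ and $\1$, and a restart loop $s \xrightarrow{x} n$, $\1 \xrightarrow{y} s$ --- while the second automaton $\B$ is almost structureless: a hub state $t$ with self-loops under $a_i,b_i$ for $\vee$- and $\1$-gates, a two-state cycle $t \xrightarrow{a_j} j \xrightarrow{b_j} t$ for each $\wedge$-gate $j$, and a matching $x/y$ loop through the accepting state. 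The conjunction is then enforced not by a symbol-by-symbol ``handshake protocol'' (which a finite automaton cannot perform across languages) but by a counting argument: in an infinite tower, $\B$'s cycle forces $a_j$ and $b_j$ to co-occur, the symbol $x$ separates their occurrences on the $\A'$-side, so their number of occurrences grows unboundedly with the height, and $\A'$ can supply unboundedly many of both only if $\1$ is reachable from both children of $g_j$, i.e., only if both inputs are true. Your soundness direction is precisely this point, and the paper settles it with a minimal-counterexample argument: take the smallest index $j$ of a false $\wedge$-gate whose symbol occurs in the tower; the transition from $j$ under that symbol leads either to a smaller false gate (so any accepting path into $\1$ passes a false $\wedge$-gate of smaller index, contradicting minimality) or the word is accepted in $\0$, whose incoming symbols never occur in words of $L(\B)$, contradicting the alternation of the tower into $L(\B)$. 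Without a concrete gadget, this argument cannot even be started.

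The minimality claim also needs an idea your plan lacks. The natural circuit automaton $\A'$ is \emph{not} minimal in general --- gates that do not feed into $g_n$ produce unreachable or equivalent states --- and it cannot be minimized within logarithmic space, as you correctly note. The paper's device is to add transitions under \emph{fresh} symbols (from $s$ to each of the states $1,\ldots,n-1$, from each state $1,\ldots,n$ to $\0$, and from $\0$ to $\1$), which makes the resulting DFA $\A$ minimal by giving every state an explicit reachability and distinguishing word, and then to observe that these symbols are harmless: since $L(\B)\subseteq \Sigma^*$, no word of an infinite tower on the $\A$-side can contain a fresh symbol, so an infinite tower exists between $L(\A)$ and $L(\B)$ if and only if one exists between $L(\A')$ and $L(\B)$. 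Declaring that the automata will be ``minimal by construction'' does not substitute for such a device; as it stands, both the soundness direction and the minimal-DFA strengthening are gaps, not steps.
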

  \begin{proof}
    The membership in {\sc PTime} was independently shown by Czerwi\'nski et al.~\cite{icalp2013} and Place et al.~\cite{mfcsPlaceRZ13}. 

    We prove {\sc PTime}-hardness by reduction from the monotone circuit value problem (MCVP). Given an instance $g_1, g_2,\ldots, g_n$ of MCVP, we construct two minimal DFAs $\A$ and $\B$ using a log-space reduction and prove that there exists an infinite tower between their languages if and only if the circuit evaluates gate $g_n$ to~$\1$. The theorem then follows from the fact that non-separability of two regular languages by a piecewise testable language is equivalent to the existence of an infinite tower~\cite{icalp2013}.
    
    Let $f(i)$ be the element of $\{\wedge,\vee,\0,\1\}$ such that $g_i$ is an $f(i)$-gate. For every $\wedge$-gate and $\vee$-gate, we set $\ell(i)$ and $r(i)$ to be the indices such that $g_i=g_{\ell(i)}f(i)g_{r(i)}$ is the defining equality of $g_i$. If $g_i$ is a $\0$-gate, we set $f(i)=\ell(i)=r(i)=\0$, and if $g_i$ is a $\1$-gate, we set $f(i) = \ell(i) = r(i) = \1$.

    We first construct an automaton $\A'=(Q_{\A'},\Sigma,\delta_{\A'},s,F_{\A'})$ with states $Q_{\A'}= \{s,\0,\1,1,2,\dots,n\}$, the input alphabet $\Sigma=\{x,y\}\cup\{a_i,b_i \mid i=1,\ldots,n\}$, and accepting states $F_{\A'}=\{\0,\1\}$. The initial state of $\A'$ is $s$ and the transition function $\delta_{\A'}$ is defined by $\delta_{\A'}(i,a_i)=\ell(i)$ and $\delta_{\A'}(i,b_i)=r(i)$. In addition, there are two special transitions $\delta_{\A'}(s,x)=n$ and $\delta_{\A'}(\1,y)=s$. 
    
    To construct automaton $\B=(Q_{\B},\Sigma,\delta_{\B},q,F_{\B})$, let $Q_{\B}= \{q,t\}\cup\{i\mid f(i)=\wedge\}$ and $F_{\B}=\{q\}$, where $q$ is also the initial state of $\B$. If $f(i)=\vee$ or $f(i)=\1$, we define $\delta_{\B}(t,a_{i})=\delta_{\B}(t,b_{i})=t$. If $f(i)=\wedge$, we define $\delta_{\B}(t,a_{i})=i$ and $\delta_{\B}(i,b_{i})=t$. Finally, we define $\delta_{\B}(q,x)=t$ and $\delta_{\B}(t,y)=q$. 
    
    All undefined transitions go to the unique sink states of the respective automata. The automata $\A'$ and $\B$ can be constructed from $g_1,\ldots,g_n$ in logarithmic space. An example of the construction for the circuit $g_1=\0$, $g_2=\1$, $g_3=g_1\wedge g_2$, $g_4= g_3 \vee g_3$ is illustrated in Figure~\ref{figPcomp}.

    The languages $L(\A')$ and $L(\B)$ are disjoint, the automata $\A'$ and $\B$ are deterministic, and $\B$ is minimal. However, automaton $\A'$ need not be minimal because the circuit may contain gates that do not contribute to the definition of the value of $g_n$. We therefore define a minimal deterministic automaton $\A$ by adding new transitions into $\A'$, each under a fresh symbol, from state $s$ to each of the states $1,2,\dots,n-1$, from each of the states $1,2,\dots,n$ to state $\0$, and from state $\0$ to state $\1$. This can again be done in logarithmic space. No new transition is defined in $\B$. 

    \begin{figure}
      \centering
      \begin{tikzpicture}[auto,baseline,->,>=stealth,shorten >=1pt,node distance=2cm,
        state/.style={circle,minimum size=1mm,very thin,draw=black,initial text=},
        bigloop/.style={shift={(0,0.01)},text width=1.6cm,align=center},
        bigloopd/.style={shift={(0,-0.01)},text width=1.6cm,align=center}]
        \node[state,initial]    (1) {$s$};
        \node[state]            (2) [right of=1] {$4$};
        \node[state]            (3) [right of=2] {$3$};
        \node[state]            (5) [right of=3] {$1$};
        \node[state]            (4) [below=.5cm of 5] {$2$};
        \node[state,accepting]  (6) [right of=4] {$\1$};
        \node[state,accepting]  (7) [right of=5] {$\0$};
        \path
          (1) edge node{$x$} (2)
          (2) edge node{$a_{4},b_4$} (3)
          (3) edge node{$b_3$} (4)
              edge node{$a_3$} (5)
          (6) edge[bend left=34] node{$y$} (1)
          (5) edge node{$a_1,b_1$} (7)
          (4) edge node{$a_2,b_2$} (6) ;
      \end{tikzpicture}
      \qquad
      \begin{tikzpicture}[auto,baseline,->,>=stealth,shorten >=1pt,node distance=2cm,
        state/.style={circle,minimum size=1mm,very thin,draw=black,initial text=},
        bigloop/.style={shift={(0,0.01)},text width=1.6cm,align=center},
        bigloopd/.style={shift={(0,-0.01)},text width=1.6cm,align=center}]
        \node[state,initial,accepting]    (1) {$q$};
        \node[state]                      (2) [right of=1] {$t$};
        \node[state]                      (3) [right of=2] {$3$};
        \path
          (1) edge[bend left] node{$x$} (2)
          (2) edge[loop above] node[bigloop]{$a_2,b_2$\\$a_4,b_4$} (2)
          (2) edge[bend left] node{$a_3$} (3)
          (3) edge[bend left] node{$b_3$} (2)
          (2) edge[bend left] node{$y$} (1) ;
      \end{tikzpicture}
      \caption{Automata $\A'$ and $\B$ for the circuit $g_1=\0$, $g_2=\1$, $g_3=g_1\wedge g_2$, $g_4= g_3 \vee g_3$.}
      \label{figPcomp}
    \end{figure}
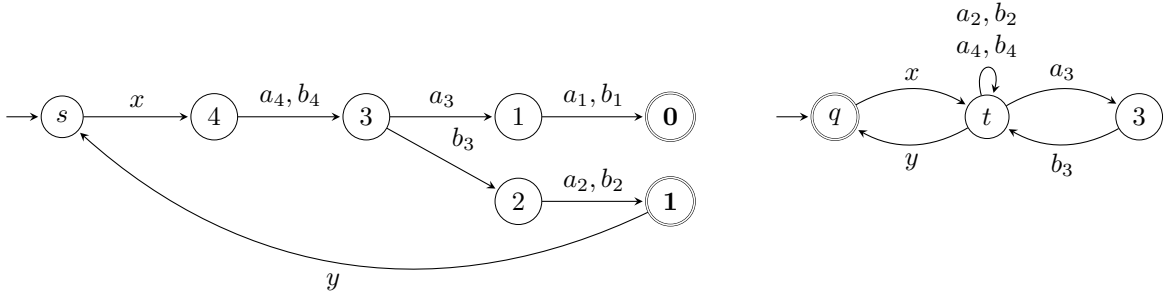

    Since the language of $\B$ is over $\Sigma$, the symbols of $\A$ not belonging to $\Sigma$ have no effect on the existence of an infinite tower between $L(\A)$ and $L(\B)$. Namely, there exists an infinite tower between the languages $L(\A)$ and $L(\B)$ if and only if there exists an infinite tower between $L(\A')$ and $L(\B)$. It is therefore sufficient to prove that the circuit evaluates gate $g_n$ to $\1$ if and only if there is an infinite tower between the languages $L(\A')$ and $L(\B)$. 
    
    The intuition behind the construction is that the symbols of an infinite tower with unbounded number of occurrences correspond to gates that evaluate to $\1$ to satisfy $g_n$, and that the non-existence of an infinite tower implies the existence of a symbol with bounded number of occurrences in $\A'$ that appears in a non-trivial cycle of the form $a_jb_j$ in $\B$. Such a state corresponds to an $\land$-gate, $g_j$, which cannot be satisfied and causes that $g_n$ evaluates to $\0$ (cf. symbol $a_3$ in Figure~\ref{figPcomp}).
    
    If there are no $\land$-gates, $g_n$ is satisfied if and only if state $\1$ is reachable from state $n$ in $\A'$. Let $w$ be a word under which state $\1$ is reachable from state $n$. Then $xw\in L(\A')$, $xwy\in L(\B)$, $xwyxw\in L(\A'),\ldots$ is an infinite tower between $L(\A')$ and $L(\B)$. If state $\1$ is not reachable from state $n$ in $\A'$, then the language $L(\A')$ is finite and there is indeed no infinite tower between $L(\A')$ and $L(\B)$.
    
    The problem with $\land$-gates is how to ensure that both children of an $\land$-gate $g_j$ are satisfied. To this aim, we use the nontrivial cycle under $a_jb_j$ in $\B$, which enforces that both $a_j$ and $b_j$ appear in the words of an infinite tower. Speaking intuitively, automata $\A'$ and $\B$ encode the satisfiability check of $g_j$ (see state $g_3$ in Figure~\ref{figPcomp}) in the following way. Automaton $\A'$ checks reachability of state $\1$ from state $j$ under a word in $a_j\Sigma^* \cup b_j\Sigma^*$ and automaton $\B$ ensures that $a_j$ appears in a word in $L(\B)$ if and only if $b_j$ does. The main idea now is that if there is an infinite tower $(w_i)_{i=1}^{\infty}$ and $a_j$ appears in a word $w_i \in L(\A')$, then both $a_j$ and $b_j$ appear in $w_{i+1}\in L(\B)$. By the construction of $\A'$, symbol $x$ appears between any two occurrences of $a_j$ and $b_j$, hence $\B$ increases the number of occurrences of $a_j$ and $b_j$ in the words of the tower as the height grows. Since the tower is infinite, the number of their occurrences is unbounded. However, to read an unbounded number of $a_j$ and $b_j$ in $\A'$ requires that there is a path from state $j$ to state $\1$ under a word in $a_j\Sigma^*$ as well as under a word in $b_j\Sigma^*$, which (using inductively the same argument for other $\land$-gates) is possible only if $g_j$ is satisfied.
    In Figure~\ref{figPcomp}, the words of $L(\A')$ contain at most one occurrence of $a_3$, whereas those of $L(\B)$ require unbounded number of occurrences of $a_3$. Thus, there is no infinite tower between the languages of Figure~\ref{figPcomp}.

    We now formally prove that the circuit evaluates gate $g_n$ to $\1$ if and only if there is an infinite tower between the languages $L(\A')$ and $L(\B)$. The dependence between the gates $g_1,g_2,\ldots,g_n$ can be depicted as a directed acyclic graph $G=(\{1,2,\ldots,n\},E)$, where $E$ is defined as $\delta_{\A'}$ without the labels, multiplicities and states $s,\0,\1$. We say that $i$ is {\em accessible\/} from $j$ if there is a path from $j$ to $i$ in $G$.

    (Only if) Assume that $g_n$ is evaluated to $\1$. We construct an alphabet $\Gamma$, $\{x,y\} \subseteq \Gamma \subseteq \Sigma$, under which both automata $\A'$ and $\B$ have a cycle containing the initial and an accepting state. These cycles then imply the existence of an infinite tower between the languages $L(\A')$ and $L(\B)$. Symbol $a_i$ belongs to $\Gamma$ if and only if $g_i$ is evaluated to $\1$, $i$ is accessible from $n$, and either $\ell(i)=\1$ or $g_{\ell(i)}$ is evaluated to $\1$. Similarly, $b_i$ belongs to $\Gamma$ if and only if $i$ is accessible from $n$, $g_i$ is evaluated to $\1$, and either $r(i)=\1$ or $g_{r(i)}$ is evaluated to $\1$.  It is not hard to observe that each transition labeled by a symbol $a_i$ or $b_i$ from $\Gamma$ is part of a path from $n$ to $\1$ in $\A'$, hence it appears on a cycle in $\A'$ from the initial state $s$ back to state $s$ through the accepting state $\1$. Moreover, the definition of $\wedge$ implies that $a_i\in \Gamma$ if and only if $b_i\in \Gamma$ for each $i=1,2,\dots, n$ such that $f(i)=\wedge$. Notice that $\B$ has a cycle from $q$ to $q$ labeled by $xa_ib_iy$ for each $i=1,2,\dots,n$ with $f(i)\neq\0$. Therefore, both automata $\A'$ and $\B$ have a cycle over the alphabet $\Gamma$ containing the initial and accepting states. The existence of an infinite tower follows.
    
    (If) Assume that there exists an infinite tower $(w_i)_{i=1}^{\infty}$ between $L(\A')$ and $L(\B)$, and, for the sake of contradiction, assume that $g_n$ is evaluated to $\0$. Note that any path from $i$ to $\1$ in $\A'$, where $g_i$ is evaluated to $\0$, must contain a state corresponding to an $\wedge$-gate that is evaluated to $\0$. In particular, this applies to any path in $\A'$ accepting a word of the infinite tower of length at least $n+2$, since such a path contains a subpath from $n$ to $\1$. Let $j$ denote the smallest positive integer such that $f(j)=\wedge$, gate $g_j$ is evaluated to $\0$, and $a_j$ or $b_j$ is in $\cup_{i=1}^\infty\alp(w_i)$. The construction of $\B$ implies that both $a_j$ and $b_j$ are in $\cup_{i=1}^\infty\alp(w_i)$ because of the nontrivial cycle $a_jb_j$. Since $g_j$ is evaluated to $\0$, there exists $c\in\{a,b\}$ such that the transition from $j$ under $c_j$ leads to  a state $\sigma$, where either $\sigma=\0$ or $\sigma<j$ and $g_{\sigma}$ is evaluated to $\0$. Consider a word $w_i\in L(\A')$ of the infinite tower containing $c_j$. If $w_i$ is accepted in $\1$, then the accepting path contains a subpath from $\sigma$ to $\1$, which yields a contradiction with the minimality of $j$. Therefore, $w_i$ is accepted in $\0$. However, no symbol of a transition to state $\0$ appears in a word accepted by $\B$ (cf. the symbols $a_1$ and $b_1$ in Figure~\ref{figPcomp}), a contradiction again.
  \end{proof}

\subsubsection*{Acknowledgements}
  The author is very grateful to {\v S}t{\v e}p{\' a}n Holub for his comments on the preliminary version of this work, and to an anonymous reviewer for valuable comments. The research was supported by the German Research Foundation (DFG) in Emmy Noether grant KR~4381/1-1 (DIAMOND).

\bibliographystyle{elsarticle-harv}
\bibliography{masopust}
 
\end{document}